\begin{document}

\title{On asymptotic effects of boundary
perturbations in exponentially shaped Josephson junctions
}

\titlerunning{On asymptotic effects of boundary perturbations in ESJJ}        

\author{Monica De Angelis \and Pasquale Renno }


\institute{M. De Angelis  \at
              Univ. of Naples  "Federico II", Dip. Mat. Appl. "R.Caccioppoli" Via Claudio n.21, 80125, Naples, Italy. \\
               \email{modeange@unina.it} \and P. Renno \at
              Univ. of Naples  "Federico II", Dip. Mat. Appl. "R.Caccioppoli" Via Claudio n.21, 80125, Naples, Italy. \\
               \email{renno@unina.it}          
}

\date{Received: date / Accepted: date}

\maketitle

\begin{abstract}
A parabolic integro differential operator  $\mathcal L,$   	suitable to describe many   phenomena in various physical fields, is considered. By means of equivalence between   $\mathcal L$  and the third order  equation describing  the evolution inside an exponentially shaped Josephson junction (ESJJ),  an  asymptotic analysis  for (ESJJ) is achieved, explicitly evaluating,    boundary contributions related to  the Dirichlet problem.

\keywords{Superconductivity \and Junctions\and Laplace Transform \and Initial- boundary problems for higher order parabolic equations }
 \PACS{74.50.+r \and 02.30.Jr }
\subclass{44A10 \and 35A08 \and 35K35 \and 35E05}
\end{abstract}

\section{Introduction}
\label{intro}

 Many  equivalences  among  nonlinear  operators and p.d.e. systems exist and an extensive bibliography is given. (see, f.i. \cite{rb,acscott,acscott02}). Here,  the  semilinear equation, which characterizes exponentially shaped Josephson junctions  in superconductivity (ESJJ) (\cite{ddf,mda13,mda10,df13,df213} and references therein), is considered  and  equivalence  with  the following  parabolic integro differential equation:

\begin{equation}   \label{11}
  \mathcal L u \equiv \,\, u_t -  \varepsilon  u_{xx} + au +b \int^t_0  e^{- \beta (t-\tau)}\, u(x,\tau) \, d\tau \,=\, f(x,t,u) \,
   \end{equation}  
  
  \noindent  is proved. In this way,   a priori estimates for (ESJJ) are obtained  and, by means of a well known  
theorem  on  convolutions behavior, 
  asymptotic effects of boundary perturbations are achieved for the solution of initial boundary value problem with Dirichlet conditions.

   Operator  $ {\cal L}  $  defined in  (\ref{11}) can  describe  many linear and non linear physical phenomena and there is plenty  of bibliography.\cite{bcf,r,dr1,mps,fr,l,dmr,s}. In particular, when $ F = F(x,t,u)$, some non linear phenomena involve equation (1) both in superconductivity and biology and  Dirichlet  conditions in superconductivity  refer to the phase boundary specifications \cite{mda10,df13,df213,de}, while in
excitable systems occur when
 the pulse propagation in
 heart cells is studied\cite{ks}. Besides,  Dirichlet problem is also
considered  for stability analysis and asymptotic behavior of reaction-diffusion
systems solutions, \cite{t,fdd}, or in hyperbolic diffusion \cite{g}.

 Previous analyses related to equation (\ref{11}) have been developed in  \cite{dm13,de13,dr13,dr8} assuming  $ \varepsilon, a, b, \beta  $  as  positive constants. The fundamental solution $ K $ has been determined and many of  its properties have been proved. Moreover, various boundary value problems have been considered as well as non linear integral equations have been determined, whose Green functions have numerous typical properties of the diffusion equation.  Besides, the  kernel  $ e^{- \beta (t-\tau)}\, u(x,\tau) $  in (\ref{11}) can be modified   as physical situations demand, and this particular choice has been made taking into consideration superconductive and biological models.
As a matter of fact, it is possible to prove that  (\ref{11}) can  characterize  reaction diffusion models, like the FitzHugh-Nagumo system, suitable to  model the propagation  of nerve impulses. Furthermore, the perturbed sine Gordon equation (PSGE)  can be deduced and, as proved further,  the evolution in a  Josephson junction  with nonuniform width can be characterized, too.

   In superconductivity it is well known that the Josephson effect   is modeled by the (PSGE) given by (see, f.i. \cite{bp}):
\begin{equation}  \label{12}
\varepsilon u_{xxt}\, - \, u_{tt} \, +\, u_{xx}-\, \alpha u_t = \,  \, \sin u \,- \gamma   
\end{equation} 

\noindent where $u$ represents the difference between the phase of the wave functions related to the two superconductors of the junction, $ \gamma >0$ is a forcing term that is proportional to a bias current, the $\alpha$-term accounts for dissipation due to  normal electrons crossing the junction, and the 
$\varepsilon$-term accounts for dissipation  caused by normal electrons flowing parallel to the junction.

Besides, when the case of an exponentially shaped Josephson junction (ESJJ) 
 is considered, denoting by  $\, \lambda $  a   positive constant,  the evolution of the phase inside the junction is described by the third order equation:

 \begin{equation}            \label{13}
  \varepsilon u_{xxt}+u_{xx}-u_{tt} - \varepsilon\lambda u_{xt} -\lambda  u_x- \alpha u_{t}=\sin
  u-\gamma
 \end{equation}
 \noindent  where terms  $ \,\lambda \, u_x $ and $\,\lambda \, \varepsilon \,u_{xt} \,$ stand for the current as a consequence of the tapering and, more specifically, $ \lambda u _{x} $ represents a geometrical force leading the fluxons from the wide edge to the narrow edge. \cite{bcs96,bcs00,cmc02}. Exponentially shaped Josephson junctions provide several advantages compared to rectangular ones. Among the others, we can mention the possibility of getting  a voltage which is not chaotic anymore, but rather periodic. This allows to exclude some among the possible causes of large spectral width and to avoid the problem of trapped flux.   \cite{bss,j05,ssb04,j005}. 
 
\section{Statement of the problem}

\label{sec:2}

If   $\, T\, $  is   an arbitrary positive constant and   

\[
\,   \Omega_T \, \equiv \{\,(x,t) : \, 0\,\leq \,x \,\leq L \,\,;  \ 0 < t \leq T \,\}, \]

  \noindent let us consider the following initial boundary value problem with  Dirichlet boundary conditions:

\begin{equation}   \label{21}
\left \{
   \begin{array}{lll}
\,u_t -  \varepsilon  u_{xx} + au +b \int^t_0  e^{- \beta (t-\tau)}\, u(x,\tau) \, d\tau \, =  F(x,t,u) \, & (x,t) \in \Omega_T \,  \\    
\\  \,u (x,0)\, = u_0(x)\, \,\,\, &
x\, \in [0,L], 
\\
\\
  \, u(0,t)\,=\,g_1(t)  \qquad u(L,t)\,=\,g_2(t) & 0<t\leq T.
   \end{array}
  \right.
\end{equation}

Denoting by   $ \, K(x,t) \, $ the fundamental solution of the linear  operator defined by (\ref{11}), and considering   $ \varepsilon, a, b, \beta  $  as  positive constants, it results \cite {dr8}:

\begin{equation}     \label {22}
K(r,t)=  \frac{1}{2 \sqrt{\pi  \varepsilon } }\biggl[ \frac{ e^{- \frac{r^2 }{4 t}-at}}{\sqrt t}
 -b \int^t_0  \frac{e^{- \frac{r^2}{4 y}\,- a y}}{\sqrt{t-y}}   e^{-\beta ( t -y)}  J_1 (2 \sqrt{by(t-y)\,})dy \biggr]
\end{equation}

\noindent where $\, r\,= |x| \, / \sqrt \varepsilon \, \, $ and   $ J_n (z) \,$    denotes the Bessel function of first kind and order $\, n.\,$ Moreover, the following theorem holds:

\begin{theorem}

For all $t>0$, the Laplace transform of  $\,K (r,t)\, \,$  with respect to $\,t\,$ converges absolutely in the half-plane $ \Re e  \,s > \,max(\,-\,a ,\,-\beta\,)\,$ and it results:

\begin{equation}      \label{23}
\,\hat K\,(r,s)  =\,\int_ 0^\infty e^{-st} \,\, K\,(r,t) \,\,dt \,\,=  \, \frac{e^{- \,r\,\sigma}}{2 \, \sqrt\varepsilon \,\sigma \,  }
\end{equation}

 \noindent with $\,\sigma^2 \ \,=\, s\, +\, a \, + \, \frac{b}{s+\beta}.$ 
\end{theorem}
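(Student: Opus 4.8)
The plan is to evaluate the integral in (\ref{23}) directly from the explicit kernel (\ref{22}), splitting the transform into the contribution $I_1$ of the pure heat term $t^{-1/2}e^{-r^2/(4t)-at}$ and the contribution of the Bessel convolution. Before any computation I would settle absolute convergence, since it both pins down the half-plane and licenses the interchanges used later. Near $t=0$ the factor $t^{-1/2}e^{-r^2/(4t)}$ is integrable and the Bessel term is bounded, so convergence is decided as $t\to\infty$: the first term behaves like $t^{-1/2}e^{-at}$, forcing $\Re e\,s>-a$, whereas in the second term $|J_1|\le 1$ bounds the inner integral by the convolution of $e^{-ay}$ with $\tau^{-1/2}e^{-\beta\tau}$, which decays like $e^{-\min(a,\beta)t}$ up to a power of $t$. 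The binding requirement is thus $\Re e\,s>-\min(a,\beta)=\max(-a,-\beta)$, precisely the stated half-plane.

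For $I_1$ I would use the classical identity $\int_0^\infty y^{-1/2}e^{-Py-q/y}\,dy=\sqrt{\pi/P}\,e^{-2\sqrt{Pq}}$ with $P=s+a$ and $q=r^2/4$, giving $I_1=\sqrt{\pi/(s+a)}\,e^{-r\sqrt{s+a}}$. Divided by $2\sqrt{\pi\varepsilon}$ this is already $\frac{e^{-r\sigma_0}}{2\sqrt\varepsilon\,\sigma_0}$ with $\sigma_0=\sqrt{s+a}$, i.e. the claimed value at $b=0$; the whole content of the theorem is therefore that the Bessel term promotes $\sigma_0$ to $\sigma=\sqrt{s+a+b/(s+\beta)}$.

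The main obstacle is exactly this Bessel convolution, because the argument $2\sqrt{by(t-y)}$ couples the two convolution variables and blocks a direct appeal to the convolution theorem. I would break the coupling by Fubini (justified by the convergence above) and the substitution $\tau=t-y$, turning the $t$-integration at fixed $y$ into the one-variable transform $\int_0^\infty e^{-(s+\beta)\tau}\tau^{-1/2}J_1(2\sqrt{by\,\tau})\,d\tau$. This is tabulated: it follows from the standard pair $\mathcal L\{\tau^{1/2}J_1(2\sqrt{c\tau})\}=c^{1/2}p^{-2}e^{-c/p}$ together with the division-by-$\tau$ rule (or directly from the power series of $J_1$, whose termwise transform sums in closed form), and it equals $\big(1-e^{-by/(s+\beta)}\big)/\sqrt{by}$. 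Substituting this back leaves a single outer $y$-integral of the same classical form as in the previous step, now carrying the two exponents $s+a=\sigma_0^2$ and $s+a+b/(s+\beta)=\sigma^2$.

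Assembling the pieces, the outer integral produces a difference of two terms proportional to $e^{-r\sigma_0}/\sigma_0$ and $e^{-r\sigma}/\sigma$; combined with $I_1$ the $\sigma_0$ contributions cancel, and after collecting constants one is left with $\sqrt\pi\,e^{-r\sigma}/\sigma$, which divided by $2\sqrt{\pi\varepsilon}$ gives $\frac{e^{-r\sigma}}{2\sqrt\varepsilon\,\sigma}$, as required. As an independent check I would transform the defining relation for $K$ itself: since the memory term transforms to $\frac{b}{s+\beta}\hat K$ and $K(x,0^+)=\delta(x)$, the transform satisfies $-\varepsilon\hat K_{xx}+\sigma^2\hat K=\delta(x)$, whose bounded Green's function $\frac{1}{2\sqrt\varepsilon\,\sigma}e^{-|x|\sigma/\sqrt\varepsilon}=\frac{e^{-r\sigma}}{2\sqrt\varepsilon\,\sigma}$ reproduces the same formula. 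The only delicate points I anticipate are the justification of the order interchange and of the termwise transform of the Bessel series, both controlled by the exponential decay established at the outset.
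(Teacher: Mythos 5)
The paper itself does not prove this theorem: formula (\ref{23}) is simply quoted from \cite{dr8}, so there is no internal proof to compare against. Your plan is a legitimate direct verification, and the individual ingredients are all correct: the identity $\int_0^\infty y^{-1/2}e^{-Py-q/y}\,dy=\sqrt{\pi/P}\,e^{-2\sqrt{Pq}}$, the tabulated pair $\mathcal L\{\tau^{-1/2}J_1(2\sqrt{c\tau})\}=(1-e^{-c/p})/\sqrt c$ (obtainable termwise exactly as you describe), and the Fubini interchange licensed by the absolute convergence you establish for $\Re e\,s>\max(-a,-\beta)$. Your closing sanity check --- that $\hat K$ must be the decaying Green's function of $-\varepsilon \hat K_{xx}+\sigma^2\hat K=\delta$ --- is in fact the computation the paper itself relies on in (\ref{24})--(\ref{25}), and is the cleanest route to (\ref{23}).

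The one place your argument does not close as written is the final ``collecting constants'' step, and the fault lies in formula (\ref{22}) rather than in your method. Carrying the computation through literally with the prefactor $b$ printed in (\ref{22}), the Bessel contribution transforms to $-\frac{\sqrt b}{2\sqrt\varepsilon}\bigl[e^{-r\sigma_0}/\sigma_0-e^{-r\sigma}/\sigma\bigr]$, because the factor $1/\sqrt{by}$ coming from the tabulated transform absorbs only $\sqrt b$ of that prefactor; the total is then $\frac{1}{2\sqrt\varepsilon}\bigl[(1-\sqrt b)\,e^{-r\sigma_0}/\sigma_0+\sqrt b\,e^{-r\sigma}/\sigma\bigr]$, and the $\sigma_0$ terms you claim cancel do not. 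They cancel precisely when the coefficient of the integral term in (\ref{22}) is $\sqrt b$, which is the normalization that your own Green's-function check (and the source reference) actually produces for $K$. So either note explicitly that (\ref{22}) should carry $\sqrt b$, or derive $K$ from (\ref{23}) rather than the other way round; with that single correction your proof is complete.
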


Now, let us consider  the following  Laplace transforms with respect to $\,t\,$:

\[
\hat u (x,s) \, = \int_ 0^\infty \, e^{-st} \, u(x,t) \,dt \,\,, \,\,\,  \,\hat F (x,s)   \, = \int_ 0^\infty \,\, e^{-st} \,\, F\,[x,t,u (x,t)\,] \,dt \,,\
\]

 \noindent  and let $\hat g_1(s ), \,\,\, \hat g_2(s)\,\, $  be the  ${ L} $  transforms of the  data $ g_i(t ) \,\,(i=1,2).\, $

\noindent  Then the Laplace transform of the problem (\ref{21})  is formally given by:

\begin{equation}   \label{24}
\left \{
   \begin{array}{lll}
  \hat u_{xx}  \,\,- \frac{\sigma^2}{\varepsilon} \,\,\hat u =\, -\,\frac{1}{\varepsilon} \,\,[\, \,\hat  F(x,s) +u_0(x)\,\,]\\    
\\
  \,\hat u(0,s)\,=\, \hat g_1\,(s)\qquad \hat u(L,s)\,=\,\hat g_2\,(s).
   \end{array}
  \right.
\end{equation}

\noindent   If one introduces the following {\it theta function }

\begin{equation}\,  \label{25}
\hat \theta \,(\,y,\sigma)\,= 
\\  \frac{1}{2 \,\, \sqrt\varepsilon \,\,\,\sigma  } \, \biggl\{\, e^{- \frac{y}{\sqrt \varepsilon} \,\,\sigma}+\, \sum_{n=1}^\infty \,\, \biggl[ \,e^{- \frac{2nL+y}{\sqrt \varepsilon} \,\,\sigma} \, +\, e^{- \frac{2nL-y}{\sqrt \varepsilon} \,\,\sigma}\,
\biggr] \, \biggr\}
\end{equation}
\[
 =\frac{\cosh\,[\, \sigma/\sqrt{\varepsilon} \,\,(L-y)\,]}{\,2\, \, \sqrt{\varepsilon} \,\, \sigma\,\,\, \sinh\, (\,\sigma/\sqrt{\varepsilon}\,\, \,L\,)}\]

\noindent then, by  (\ref{24}) and (\ref{25}) one deduces:

\begin{equation}     \label{26}
\hat u (x,s) = \,\int _0^L \, [\,\hat \theta\,(\,x+\xi, \,s\,)\,-\,\,\,\hat \theta\,(\,|x-\xi|,\, s\,)\,] \, \,[\,u_0(\,\xi\,) \,+\,\hat F(\,\xi,s)\,]\,d\xi\, \end{equation}
\[
 -\, 2 \,\,\varepsilon \, \,\hat g_1 \,(s) \,\,    \hat\theta_x (x,s)\,+ \, 2 \,\, \varepsilon  \,\, \hat g_2 \, (s)\,\,\,\hat  \theta_x \,(L-x,s\,),
\]

\noindent where 
\begin{equation} \label{27}
\hat \theta_x\,\,(x,\sigma)\,=\,\,\frac{\sinh\,[\, \sigma/\sqrt{\varepsilon} \,\,(x-L)\,]}{\,2\, \, \varepsilon \,\, \, \sinh\, (\,\sigma/\sqrt{\varepsilon}\,\, \,L\,)}
\end{equation}

\section{Explicit solution}

In order to obtain the inverse formula for (\ref{26}),         let us  apply (\ref{23}) to (\ref{25}). Then,  one deduces the  following function which is  similar to  {\em theta functions}:

\begin{equation}     \label{31}
\theta (x,t) \,=\,  K(x,t) \ +\, \sum_{n=1}^\infty \,\, \ [\, K(x \,+2nL,\,t) \, + \, K ( x-2nL, \,t)\,] \, =  
\end{equation}
\[\, =\sum_{n=-\infty }^\infty \,\, \ K(x \,+2nL,\,t). \,\]

 \noindent So that, denoting by 
  
\[G(x,\xi, t) \, = \,  \theta \,(\,|x-\xi|,\, t\,)\,- \,  \theta \,(\,x+\xi,\,t\,),
 \] 
 
\noindent   when $\, F\, =\, f(x,t), \,$ by (\ref{26}) the explicit solution of  the {\em linear} problem (\ref{21}) is given by:

\begin{equation}   \label{32}
  u(\, x,\,t\,)\, = \,\,\int^L_0 \,
G(x,\xi, t) \, 
  \,u_0(\xi)\,\, d\xi \,\, \,+\, \,\int^t_0 d\tau\int^L_0 \, G(x,\xi, t) \, \,\,\, f\,(\,\xi,\tau\,)\, \,\,d\xi   \end{equation}

  \[
- \,2 \, \varepsilon \,\int^t_0 \theta_x\, (x,\, t-\tau) \,\,\, g_1 (\tau )\,\,d\tau\,+\, 2\,\, \varepsilon \int^t_0 \theta_x\, (x-L,\, t-\tau) \,\,\, g_2 (\tau )\,\,d\tau\,\]

 So, owing to the basic properties of $ K(x,t), $ it is easy to deduce the following theorem:

  \begin{theorem}
When the linear source $ \, f(x,t)\,  $   and the initial boundary  data $ u_0(x),$ $ g_i\,\,(i=1,2)$  are  continuous in $ \Omega_T,\, $  then  problem $(\ref{21}) $ admits a unique regular solution $ u(x,t)  $ given by  (\ref{32}).
\end{theorem}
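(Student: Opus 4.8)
The plan is to verify directly that the function $u(x,t)$ given by (\ref{32}) is a classical solution, and then to establish uniqueness by an energy argument. The whole construction rests on the fundamental solution $K$ of (\ref{11}) and on the method-of-images kernel $\theta$ defined in (\ref{31}), so the first task is to show that (\ref{31}) is well behaved. Since $K(r,t)$ carries the Gaussian factor $e^{-r^{2}/4t}$, the image series $\sum_{n=-\infty}^{\infty} K(x+2nL,t)$ converges uniformly, together with its $x$- and $t$-derivatives, on every compact subset of $\{\,0<x<L,\ 0<t\le T\,\}$. Consequently $\theta$ is as regular as $K$ away from the corner singularity, and it inherits the identity $\mathcal L\,\theta=0$ for $t>0$ term by term. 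The Green function $G(x,\xi,t)=\theta(|x-\xi|,t)-\theta(x+\xi,t)$ then satisfies $\mathcal L\,G=0$ in the interior and, by the antisymmetric placement of the images, vanishes at $x=0$ and $x=L$.

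With these preliminaries I would check the three defining requirements in turn. For the equation, differentiating (\ref{32}) under the integral sign and invoking $\mathcal L\,K=0$ shows that the two volume integrals and the two boundary convolutions all lie in the kernel of the principal part of $\mathcal L$, except for the contribution coming from the $t$-derivative acting on the upper limit of integration and on the initial layer; collecting these terms and using the delta-like behavior of $K$ as $t\to 0^{+}$ reproduces exactly the source $f(\xi,\tau)$ under the double integral, so that $\mathcal L\,u=f$ in $\Omega_T$. For the initial datum, as $t\to 0^{+}$ the kernel $G(x,\xi,t)$ concentrates at $\xi=x$ while the remaining three terms in (\ref{32}) vanish, giving $u(x,0)=u_0(x)$ by the approximate-identity property of $\theta$. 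The continuity of $f,u_0,g_i$ assumed in the statement is precisely what guarantees that these limits are attained classically.

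The delicate point, and the one I expect to be the main obstacle, is the boundary behavior of the two convolution terms built from $\theta_x$ in (\ref{27}). One must show that $-2\varepsilon\int_0^t \theta_x(x,t-\tau)\,g_1(\tau)\,d\tau$ tends to $g_1(t)$ as $x\to 0^{+}$ and to $0$ as $x\to L^{-}$, with the symmetric statement for the $g_2$ term. This amounts to proving that $\theta_x$ acts as an approximate identity in $\tau$ concentrated at the appropriate edge, and it requires a careful estimate of $\theta_x$ near $x=0$ and $x=L$ (again exploiting the Gaussian decay of the images, so that only the $n=0$ term survives in the limit) together with a justification of interchanging the limit in $x$ with the time integration. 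Once this is settled, the Dirichlet conditions $u(0,t)=g_1(t)$ and $u(L,t)=g_2(t)$ follow.

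Finally, for uniqueness I would pass to the difference $w=u_1-u_2$ of two regular solutions, which solves the homogeneous version of (\ref{21}) with $f=0$, $u_0=0$, $g_1=g_2=0$. Multiplying $\mathcal L\,w=0$ by $w$ and integrating over $[0,L]$, the Dirichlet data kill the boundary terms arising from the integration by parts of $\varepsilon\,w_{xx}$, leaving $\tfrac{1}{2}\tfrac{d}{dt}\|w\|_{2}^{2}+\varepsilon\|w_x\|_{2}^{2}+a\|w\|_{2}^{2}$ balanced against the memory term $b\int_0^t e^{-\beta(t-\tau)}\langle w(\cdot,\tau),w(\cdot,t)\rangle\,d\tau$. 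Bounding the latter by Young's inequality and applying a Gronwall argument to $\|w(\cdot,t)\|_{2}^{2}$ forces $w\equiv 0$, which gives uniqueness and completes the proof.
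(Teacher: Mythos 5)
Your plan is correct in outline, but it takes a genuinely different route from the paper's. The paper's ``proof'' of this theorem is essentially the constructive derivation that precedes it: the problem is Laplace-transformed, the resulting two-point boundary value problem (\ref{24}) is solved in terms of the theta kernel (\ref{25})--(\ref{27}), and inversion yields (\ref{32}); the theorem is then asserted to follow ``owing to the basic properties of $K(x,t)$'' established in the cited earlier works, with uniqueness implicit in the fact that (\ref{24}) determines $\hat u$ uniquely for each $s$. The one analytic point you single out as the main obstacle --- that $-2\varepsilon\int_0^t\theta_x(x,t-\tau)g_1(\tau)\,d\tau\to g_1(t)$ as $x\to 0^+$ --- is exactly the point the paper defers to Section 5, where it is handled as you anticipate: $\theta_x$ is split as $K_r+J$ with $J\to 0$ as $x\to 0$, only the $n=0$ image survives, and the leading term of $K_x$ is the $x$-derivative of the heat kernel (times $e^{-a(t-\tau)}$, which tends to $1$ as $\tau\to t$), so the classical double-layer jump relation applies. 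Where you genuinely depart from the paper is uniqueness: your energy/Gronwall argument on the difference $w$ of two regular solutions is valid --- the memory term is bounded via Young's inequality and the quantity $\|w\|_2^2+\int_0^t\|w\|_2^2\,d\tau$ then satisfies a linear differential inequality with zero initial value --- and it buys a self-contained uniqueness proof that does not rest on justifying the Laplace inversion or on the representation formula at all, at the modest cost of requiring the regularity needed to integrate by parts. The only caveat is that your text is a plan rather than a finished argument: the approximate-identity estimate for $\theta_x$ at the two edges is announced but not carried out, and it is the one step where all the real work lies.
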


Furthermore, when   the source term $\, F\, =\, F(x,t,u) \,$   depends on the unknown function $ u(x,t), $ too, then problem  (\ref{21})     admits  the following  integral equation:

\begin{equation}   \label{33}
 u( x,t)\, = \int^L_0  G(x,\xi, t) u_0(\xi) d\xi +\int^t_0 d\tau\int^L_0  G(x,\xi, t)   F(\xi,\tau,\,u(x,\tau))d\xi- \end{equation}
 
\[\,2 \, \varepsilon \,\int^t_0 \theta_x\, (x,\, t-\tau) \,\,\, g_1 (\tau )\,\,d\tau\,+\, 2\,\, \varepsilon \int^t_0 \theta_x\, (x-L,\, t-\tau) \,\,\, g_2 (\tau )\,\,d\tau\,\]

Indeed,if 
\[\,  F= \, F (x,\,t, \,u(\,x,t\,), \,\, u_x(\ x,t\,)\,\, ),\]
\noindent   let

\begin {center}
$\ D \,= \{ \, (\,x,t,u): ( \, x,t\,) \in \Omega _T, \,\,\,\,\,-\infty <u<\infty  \,\,-\infty <p<\infty.  \, \}, $
\end{center}

\noindent and let us assume   the following  {\bf Hypotheses A}:

\noindent - The function $\,  F (x,t,u,p)\,\,$ is defined and continuous on $\, D  \,$ and  it is bounded for all $ \,u\, $ and  $\, p .\,$

 \noindent -  For each $  \,k\,>\,0\,$ and for $\, |u|,\,|p| <\,k,\,$  the function  $\,  F \,\,$ is Lipschitz continuous in    $ \, \,x\,  $ and  $ \,t\,$ for each compact subset of $\, D.\,$

\noindent - There exists a constant $ \, \beta _F \, $ such that:
\begin {center}
 $\, | F (x,t,u_1,p_1)\,-\, F (x,t,u_2,p_2)|\, \leq \,\beta _{ F} \,\, \{\, | u_1-u_2\,| \, +\,| p_1-p_2\,| \, \} \ $ 
\end{center}

\noindent holds for all  $\,( u_i,\,p_i) \,\, i=1,2. \,$

Moreover,  let us  assume  $\,\, ||\,z\,|| \,= \displaystyle \sup _{ \Omega_T\,}\, | \, z \,(\,x,\,t) \,|, \,\, $ and let $ \,{\cal B}_ T \,  $ denote the Banach space 

\begin {equation}   \label{34}
  \,{\cal B}_ T \, \equiv \, \{\, z\,(\,x,t\,) : \, z\, \in  C \,(\Omega_T),  \, \,\,   ||z|| \, < \infty \ \}.
\end{equation}

  By means of standard methods related to integral equations  and owing to properties of $ K \,\, \mbox{and} \,\, F,\,\, $it is easy  to prove that the  mapping defined by (\ref{33}) is a contraction  of $ {\cal B}_ T $ in $ {\cal B}_ T  $ and so it admits an unique fixed point   $ u(x,t)  \, \in {\cal B}_ T $  \cite{c,dmm}.  Hence

\begin{theorem}

When the  data $ (u_0, g_1,g_2 )$ are continuous functions, then the Dirichlet problem related to the non linear system (\ref{21}), has a unique solution in the space of  solutions which are regular in $ \Omega_T $. 
\end{theorem}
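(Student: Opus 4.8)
The plan is to exhibit the solution as the unique fixed point of the integral operator defined by the right-hand side of (\ref{33}), and then to upgrade that fixed point to a classical (regular) solution by exploiting the smoothing properties of the kernel $K$ recalled in Section~\ref{sec:2}. Concretely, I would define a map $\mathcal T : \mathcal B_T \to \mathcal B_T$ by letting $\mathcal T u$ be the function on the right of (\ref{33}), in which the Green term and the two boundary terms are understood as convolutions in time built from $G$ and $\theta_x$. The first task is to verify that $\mathcal T$ actually sends $\mathcal B_T$ into itself. Since $u_0,g_1,g_2$ are continuous and, by \textbf{Hypotheses A}, $F$ is continuous and bounded, continuity of $\mathcal T u$ on $\Omega_T$ follows by dominated convergence once one knows that $G(x,\xi,t)$ and $\theta_x$ are integrable in the relevant variables, while boundedness in the sup norm follows from the same integrability together with $\|F\|<\infty$. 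Both facts rest on the decay and integrability of the fundamental solution $K$ established in \cite{dr8}, transmitted to $\theta$ through the series (\ref{31}).

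Next I would prove the contraction property. For $u_1,u_2\in\mathcal B_T$ the initial-data term and the two boundary terms in (\ref{33}) are independent of the unknown and cancel in the difference, so that
\[
(\mathcal T u_1 - \mathcal T u_2)(x,t) = \int_0^t d\tau \int_0^L G(x,\xi,t-\tau)\,[\,F(\xi,\tau,u_1)-F(\xi,\tau,u_2)\,]\,d\xi .
\]
The Lipschitz bound of \textbf{Hypotheses A} gives $|F(\xi,\tau,u_1)-F(\xi,\tau,u_2)|\le \beta_F\,\|u_1-u_2\|$, whence
\[
|(\mathcal T u_1 - \mathcal T u_2)(x,t)| \le \beta_F\,\|u_1-u_2\|\int_0^t d\tau \int_0^L |G(x,\xi,t-\tau)|\,d\xi .
\]
Here the heat-type character of $K$ is decisive: the inner integral $\int_0^L|G(x,\xi,\cdot)|\,d\xi$ is uniformly bounded, so the time integral is controlled by a quantity of order $t$, which already yields a contraction on a sufficiently small interval. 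To reach an arbitrary $T$ in a single step I would replace the sup norm by the equivalent weighted norm $\|z\|_\gamma=\sup_{\Omega_T}e^{-\gamma t}|z(x,t)|$: the same estimate then carries a factor $\beta_F C/\gamma$, strictly less than one for $\gamma$ large, giving a contraction on all of $\Omega_T$. The Banach fixed point theorem then furnishes a unique $u\in\mathcal B_T$ solving (\ref{33}).

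Finally I would show that this fixed point is a \emph{regular} solution of (\ref{21}) and that uniqueness holds in that class. Since $u\in\mathcal B_T$, the composite $F(\cdot,\cdot,u)$ is a fixed continuous bounded source, so the representation (\ref{33}) reduces to the linear formula (\ref{32}) for that source; by Theorem~2 the resulting function is the unique regular solution of the associated linear Dirichlet problem, and $u$ thereby inherits the differentiability granted by $K$, satisfying (\ref{21}) classically with the initial and boundary data attained by construction of $\theta$ and $G$. The main obstacle is precisely the contraction step, i.e. securing the uniform bound $\int_0^L|G(x,\xi,t-\tau)|\,d\xi\le C$ from the series (\ref{31}); the remainder is the standard integral-equation machinery already signalled in the text \cite{c,dmm}.
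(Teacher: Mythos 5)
Your proposal takes essentially the same route as the paper, which disposes of this theorem in one line by asserting that the mapping defined by (\ref{33}) is a contraction of $\mathcal B_T$ into itself ``by standard methods'' and the properties of $K$ and $F$, citing \cite{c,dmm}. Your fleshed-out version --- self-mapping via the integrability of $G$ coming from (\ref{51}), the Lipschitz/weighted-norm contraction, and the regularity upgrade through the linear representation of Theorem~2 --- is a correct rendering of exactly that argument and supplies the details the paper omits.
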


\section{Equivalence with  (ESJJ)}

 Among many others equivalences, a  significant one  occurs between  operator   $\mathcal L$ and equation (\ref{13})  which describes  the evolution inside an exponentially shaped Josephson junction. 

Indeed, let us assume

  \begin{equation} \label{42}
\beta \, = \frac{1}{\varepsilon}\qquad b= \, \beta^2 \, (1-\alpha \, \varepsilon \,) \qquad  a \, \beta  = \,\, \frac{\lambda^2}{4} \,-\,b \end{equation}
\[ 
   f\,= \, - \int_0^t\, \, e^{-\, \frac{1}{\varepsilon}(t-\tau) } f_1 (x,\tau, u) \,d \tau, \,\]

\noindent with 

\begin{equation}  \label{41}
 f_1 =\, e^{-\frac{\lambda} {2}\,x\,} [\, \sin \,( e^{\,x\,\lambda /2\,}\,   u ) \, - \gamma]. 
  \end{equation}
  
\noindent From the integro differential equation (\ref{11}) follows:

\begin{equation}  \label{43}
\varepsilon  u_{xxt}\, - \,  u_{tt} \, +\,  u_{xx}-\, (\alpha \,+ \varepsilon \, \frac{\lambda^2} {4})  u_t\,\,-\,  \frac{\lambda^2} {4}\, u = \, f_1   
\end{equation}

\noindent

\noindent Therefore, assuming   $ e^{\frac{\lambda} {2}\,x\,} \,  u =\bar u,\,   $    (\ref{43}) turns into equation  (\ref{13}).

 The diffusion effects due to the  dissipative terms $\, \varepsilon \,( u_{xxt}\,- \lambda u_{xt})$ of (\ref{13}) have  already been investigated in \cite{df213}. Here, in sect \ref{sec:asympotic}, by means of this equivalence, the influences of  data on the solution which is related to the Dirichlet problem, will be explicitly evaluated . 

\section{Some properties on $ \theta  $ and $ \theta_x $ functions }

In order to obtain a priori estimates and asymptotic effects related to boundary perturbations, some properties  of the fundamental solution $ K  $ and the theta function defined in (\ref{31}) ought to be   evaluated. In \cite{mda10,de13,dr13,dr8} some of these have  already been proved and it results:

\begin{equation}               \label{51}
 \int_0^L |\theta (|x-\xi|,\,t)|\ \, d\xi \leq \,  ( 1\, +\, \sqrt b \,\pi \,t \, ) \,\,e^{- \omega \, t\,} \qquad \omega= min\{a, \beta\}
\end{equation}

 \begin{equation}               \label{52}
\biggr| \,  \int_0^L\,\theta (|x-\xi|,\,t)\,\,d\xi\,\biggl|\,\,\, \leq \,\sum_{n= -\infty }^\infty \, \, \int_0^L\,| K(|x-\xi +2nL|, \,t)| \,\,d\xi\,\ =
\end{equation}
 
\[ =\,\sum_{n= -\infty }^\infty \, \, \int_{x+(2n-1)L}^{x+2nL}\,| K(y,\,t)| \,dy\,\,\,\,\leq  \,\, \,\int_\Re\,\,|K(y,t)|\,d y .\,\,\]

 \noindent \noindent  Moreover, letting: 

  \begin{equation}     \label {53}
 K_{1}\,\equiv\,   \int^t_0 \,\,e^{-\,\beta \,(\,t-\tau)\,}\,K\,(x,\tau\,) \, d \tau\,,     
 \end{equation} 
\noindent one has:

 \begin{equation}   \label{54}
\int_\Re |K_1| \, \ d\xi \leq \, \,E(t);\,  \qquad  \int _0^t\\d\tau \,\int_\Re |K_1| \, \ d\xi \leq \, \beta_1\,
\end{equation}

  \noindent being \[ E(t) \,=\, \frac{e^{\,-\,\beta t}\,-\,e^{\,-\,at}}{a\,-\,\beta}\,\,>0\, \qquad \mbox{and }\qquad \beta_1\,=\, ({a\,\beta})^{\,-1}.\,\]
 
 \noindent Furthermore, as for $  \theta_x $, from (\ref{22}), it is well-rendered  that the x derivative of the  integral term vanishes  for $ x\,\rightarrow 0 \,$, while the first term represents  the  derivative with respect to  $ x $ of  the fundamental solution related to the heat equation. Indeed, since  (\ref{31}), it results:

 \begin{equation}
  \theta_r \, = K_r \, + \sum_{n=1}^\infty \,\, \ [\,-\frac{x+2nl}{2t}\,\, K(x \,+2nL,\,t) \, - \, \frac{x-2nl}{2t} \,\,K ( x-2nL, \,t)\,] \, \end{equation}
  
 \[ =\, K_r +J(r,t)\]

 \noindent and, by means of classic calculations,(see, f.i. \cite{c}), one has:  
 
 \begin{equation}
 \lim _{x \to  0} \, J(r,t) \,=\,0.
 \end{equation}
 
 \noindent So,   classic theorems assure that  conditions $(\ref{21})_3$ are surely  satisfied.

 \noindent Moreover, being:

\begin{equation} 
\lim _{t \to \infty} \int _0^t \theta_x ( x, \tau ) \,\, d\tau \,\,= \lim _{s \to 0} \hat \theta_x(x,s) \qquad  Re \,\,s > max \{-a,-\beta\}
\end{equation}

\noindent  from   (\ref{27}), it results:
\begin{equation}              \label{56}
\lim _{t \to \infty} \int _0^t \theta_x ( x, \tau ) \,\, d\tau \,\, = \frac{1}{2 \, \varepsilon \,  }\,\,\ \frac{\sinh \sigma_0 \,\,(x-L)}{\sinh\,(  \sigma_0 \, L) } \end{equation}

\noindent  where $ \sigma_0 = \sqrt{\biggl(\,a\,\,+ \frac{b}{\beta}\biggr)\frac{1}{\varepsilon}}.$

Besides,  for all $ x>\delta>0 $ it results:

\begin{equation}
\sum_{n=-\infty }^\infty \,\,e^{ -\,\,\frac{(x+2nL)^2}{4 \varepsilon \, t}\,} \, \leq \,\,t\, \,\frac{4\varepsilon \pi^2}{4 \,l^2 e} \, \csc^2 (\pi \delta  /2L), 
\end{equation}

\noindent  and  one has  
\begin{equation}
| K_r ( r,t )| \leq \,r\,\,\frac{e^{- \frac{r^2}{4 t}\,}}{4\,\sqrt{\pi \varepsilon t^3}} \,\, [ \, 1\, +\, 4\,b \,t^2 \,]\,e^{-\,\omega \, t}.\,  
\end{equation} 
\noindent So, the  following theorem can be proved:

\begin{theorem}
For all $ 0\leq x  \leq L,$ there exists  a positive constant  $C $ depending only on constants   $a, \varepsilon ,  b,\, \beta $  such that:  
 
\begin{equation} \label{511}
 \int_0^ \infty \, | \theta_x ( x,\tau )| \,\, d \tau  \,\, \leq \,\,C.
 \end{equation}
 Moreover, it results:

\begin{equation}              \label{512}
\lim _{t \to \infty}  \theta_x ( x, t ) \,\, = \,\,0,
\end{equation}

\end{theorem}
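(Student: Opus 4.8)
The plan is to split $\theta_x$ according to the image decomposition already recorded, namely $\theta_r = K_r + J(r,t)$, where $K_r$ is the $n=0$ contribution and $J$ collects the reflected terms $-\frac{x\pm 2nL}{2t}K(x\pm 2nL,t)$, $n\ge 1$. Passing from $\theta_r$ to $\theta_x$ only multiplies by the constant $1/\sqrt\varepsilon$, so it suffices to bound $\int_0^\infty|K_r|\,dt$ and $\int_0^\infty|J|\,dt$ separately and uniformly in $x\in[0,L]$. The two estimates established just above this statement — the pointwise bound on $|K_r|$ and the exponential-sum inequality for $\sum_n e^{-(x+2nL)^2/4\varepsilon t}$ — are precisely the tools required, the first governing the singular term and the second the series.

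For the $n=0$ term I would insert the pointwise bound on $|K_r(r,t)|$ and integrate in $t$. The leading factor $r\,t^{-3/2}e^{-r^2/4t}$ is handled by the change of variable $u=r^2/(4t)$, which turns $\int_0^\infty r\,t^{-3/2}e^{-r^2/4t}e^{-\omega t}\,dt$ into $2\int_0^\infty u^{-1/2}e^{-u}e^{-\omega r^2/4u}\,du\le 2\Gamma(1/2)$, a bound independent of $r$ and hence of $x$; the extra $4bt^2$ factor coming from the Bessel part is controlled by the elementary inequality $r\,e^{-r^2/4t}\le\sqrt{2t/e}$, after which the surviving $t\,e^{-\omega t}$ is integrable. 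This disposes of the only argument that can approach the origin (at $x=0$ one has $r=0$ and the term vanishes), so no $\csc^2$ blow-up occurs here.

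For the series $J$ I would factor each summand as $\frac{\rho}{2t}\,|K(\rho,t)|$ with $\rho=x\pm2nL$, bound the Gaussian part of $K$, and use $\rho\,e^{-\rho^2/8\varepsilon t}\le C\sqrt t$ to reduce a generic term to $C\,t^{-1}e^{-\rho^2/8\varepsilon t}e^{-\omega t}$. Crucially, for $x\in[0,L]$ every image argument satisfies $|x\pm 2nL|\ge L$ when $n\ge1$, so the sum stays away from the origin and the exponential-sum inequality applies with an effective $\delta$ of order $L$, yielding $\sum_n e^{-\rho^2/8\varepsilon t}\le C(L)\,t$. The factor $t$ cancels the $t^{-1}$, leaving $\int_0^\infty e^{-\omega t}\,dt=\omega^{-1}$; the Bessel contribution to $K$ carries the same $e^{-\beta t}$-type decay and is estimated identically. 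Collecting the two pieces gives (\ref{511}) with $C$ depending only on $a,\varepsilon,b,\beta$ (and the fixed length $L$).

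Finally, (\ref{512}) follows from the same pointwise ingredients: $|K_r(r,t)|\le r\,t^{-3/2}(1+4bt^2)e^{-\omega t}/(4\sqrt{\pi\varepsilon})\to0$ as $t\to\infty$, since the exponential dominates every algebraic factor, and for the series one passes to the limit term by term, the exponential-sum bound supplying the domination needed to exchange limit and summation. I expect the main obstacle to be the bookkeeping that makes the two decays cooperate — spatial Gaussian decay to sum the images and temporal $e^{-\omega t}$ decay to integrate in $t$ — together with checking that the Bessel-kernel part of $K$ (the second term of (\ref{22})), once differentiated in $x$, obeys the same bounds; this is exactly what the supplied $|K_r|$ estimate and its $4bt^2$ term encapsulate, so the remaining work is to \emph{apply} these inequalities carefully rather than to derive new ones.
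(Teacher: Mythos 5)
Your proposal is correct and follows essentially the route the paper intends: the paper itself gives no written-out proof of this theorem, but states exactly the two ingredients you use --- the pointwise bound $|K_r(r,t)|\le r\,e^{-r^2/4t}(1+4bt^2)e^{-\omega t}/(4\sqrt{\pi\varepsilon t^{3}})$ for the $n=0$ term and the $\csc^2$ exponential-sum inequality for the image series --- and asserts that the theorem ``can be proved'' from them. Your substitution $u=r^2/4t$, the elementary bound $\rho\,e^{-\rho^2/8\varepsilon t}\le C\sqrt t$, and the observation that $|x\pm 2nL|\ge L$ for $n\ge 1$ supply precisely the missing bookkeeping, so this is a faithful (indeed more detailed) realization of the paper's argument; the only caveat, which you already flag, is that the constant inevitably also depends on the fixed length $L$.
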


\section{Analysis of boundary contributions}

By means of the following well known theorem (see f.i. \cite{dr13} and references therein),    boundary contributions related to  the Dirichlet problem  (\ref{21}) can be explicitly  evaluated.

\begin {theorem}
 Let  $ h(t) $ and $ \chi(t) $  be two continuous functions on $ [0,\infty [$      satisfying  the following  hypotheses 

 \noindent 
\begin{equation}  \label{hp}
\exists \,\, \displaystyle{\lim_{t \to \infty}}h(t) \, = \, h(\infty),\qquad\exists \,\, \displaystyle{\lim_{t \to \infty}}\chi(t) \, = \, \chi(\infty)\,
\end{equation}  
 
 \noindent 
\begin{equation} \label{hp2}
 \dot  h(t)\,  \in \, L_1  [ \,0, \infty).\end{equation}

 \noindent Then, it results: 

\noindent 
 \begin{equation}      \label{63}
\lim_{t \to \infty} \,\, \int_o^t \,\chi(t-\tau ) \, \dot h ( \tau ) \, d \tau \,\, = \, \,\chi(\infty) \,\, [\,\,h(\infty) - h(0)\,\,].
\end{equation} 
\end{theorem}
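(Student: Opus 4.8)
The plan is to isolate the constant part of $\chi$ and reduce everything to the elementary identity $\int_0^t \dot h(\tau)\,d\tau = h(t)-h(0)$. First I would write, for every $t>0$,
\[
\int_0^t \chi(t-\tau)\,\dot h(\tau)\,d\tau \;=\; \chi(\infty)\int_0^t \dot h(\tau)\,d\tau \;+\; \int_0^t \bigl[\chi(t-\tau)-\chi(\infty)\bigr]\,\dot h(\tau)\,d\tau .
\]
By the fundamental theorem of calculus together with hypothesis (\ref{hp}), the first summand converges to $\chi(\infty)\,[\,h(\infty)-h(0)\,]$, which is exactly the claimed limit (\ref{63}). Hence the whole argument reduces to proving that the second summand, say $R(t)$, tends to $0$ as $t\to\infty$.

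To control $R(t)$ I would record two facts. Since $\chi$ is continuous on $[0,\infty)$ and admits a finite limit, it is bounded there, so $M:=\sup_{s\ge 0}|\chi(s)-\chi(\infty)|<\infty$. Since $\dot h\in L_1[0,\infty)$ by (\ref{hp2}), given any $\eta>0$ there is a \emph{fixed} $A>0$ with $\int_A^\infty |\dot h(\tau)|\,d\tau<\eta$. I then split $R(t)=\int_0^A+\int_A^t$. For the tail one estimates directly
\[
\Bigl|\int_A^t \bigl[\chi(t-\tau)-\chi(\infty)\bigr]\,\dot h(\tau)\,d\tau\Bigr|\;\le\; M\int_A^\infty |\dot h(\tau)|\,d\tau\;<\;M\eta ,
\]
a bound that is uniform in $t$.

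For the remaining compact part I would exploit the limit of $\chi$: given $\delta>0$ there is $T_0$ with $|\chi(s)-\chi(\infty)|<\delta$ for all $s>T_0$; then for every $t>T_0+A$ and every $\tau\in[0,A]$ one has $t-\tau>T_0$, whence $|\chi(t-\tau)-\chi(\infty)|<\delta$ uniformly on $[0,A]$. Consequently
\[
\Bigl|\int_0^A \bigl[\chi(t-\tau)-\chi(\infty)\bigr]\,\dot h(\tau)\,d\tau\Bigr|\;\le\;\delta\int_0^A|\dot h(\tau)|\,d\tau\;\longrightarrow\;0 \quad (t\to\infty).
\]
Combining the two estimates gives $\limsup_{t\to\infty}|R(t)|\le M\eta$, and since $\eta>0$ is arbitrary we conclude $R(t)\to 0$, which finishes the proof.

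I expect the only delicate point to be the uniform-in-$\tau$ passage to the limit on the compact interval $[0,A]$: the argument $t-\tau$ must be driven to infinity simultaneously for all $\tau\in[0,A]$, and this is precisely what forces one to fix the threshold $A$ from the $L_1$ tail of $\dot h$ \emph{before} letting $t\to\infty$. The interplay of the two limits — smallness of the tail, valid uniformly in $t$, against the uniform convergence $\chi(t-\tau)\to\chi(\infty)$ on $[0,A]$ as $t\to\infty$ — is the heart of the estimate; everything else is routine.
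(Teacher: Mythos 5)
Your proof is correct and complete: the decomposition into $\chi(\infty)\int_0^t\dot h$ plus a remainder, followed by the two-scale splitting of the remainder at a threshold $A$ fixed from the $L_1$ tail of $\dot h$, is exactly the standard argument for this convolution limit, and you handle the one delicate point (driving $t-\tau$ to infinity uniformly for $\tau\in[0,A]$) properly. Note that the paper itself offers no proof of this statement --- it invokes the theorem as ``well known'' and refers the reader to the cited literature --- so there is no in-paper argument to compare against; your write-up supplies the missing details faithfully. A minor observation you could add: the existence of $h(\infty)$ is in fact automatic from $\dot h\in L_1[0,\infty)$ together with the fundamental theorem of calculus, so that hypothesis in (\ref{hp}) is redundant rather than an independent assumption.
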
 

\noindent According to this, the following theorem holds:

\begin{theorem} \label{theorem asintotico}
If   data $ g_ i  \,\,\ (i=1,2) \,\,$  are two  continuous functions  satisfying condition (\ref{hp}) then, one has:

\begin{equation}    \label{64}
\lim_{t \to \infty } \,\int_0^t \,\theta_x \,(x,\tau)\,\,\, g_i \,(t-\tau)\, \,d\,\tau \, = \, g_{i, \infty} \,\,\,\,\,  \frac{1}{2 \, \varepsilon \,\,\,  }\,\,\ \frac{\sinh \sigma_0  \,\,(x-L)}{\sinh\  \sigma_0  \, L }
\end{equation}

\noindent being  $ g_{i,\infty }= \lim_{t \to \infty } g_i , \,\, (i=1,2)$  and $ \sigma_0 = \sqrt{\biggl(\,a\,\,+ \frac{b}{\beta}\biggr) \frac{1}{\varepsilon}}.$

Moreover, when  data  $\varphi _ i \,\,\, ( i=1,2) $ verify both condition  (\ref{hp}) and condition (\ref{hp2}),  then it results: 

\begin{equation}    \label{65}
\lim_{t \to \infty } \,\int_0^t \,\theta_x \,(x,\tau)\,\,\, \varphi_i \,(t-\tau)\, \,d\,\tau \, = \,0\,  \qquad ( i= 1, 2) 
\end{equation}
\end{theorem}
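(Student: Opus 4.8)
The plan is to recognize both limits as instances of the convolution theorem (\ref{63}), the only substantive inputs being the three properties of the kernel $\theta_x$ proved just above: the $L_1$ bound (\ref{511}), the value of its total mass (\ref{56}), and the decay (\ref{512}). Throughout I abbreviate $\Theta_\infty(x)=\frac{1}{2\varepsilon}\,\frac{\sinh\sigma_0(x-L)}{\sinh\sigma_0 L}$, which by (\ref{56}) and (\ref{511}) equals the absolutely convergent integral $\int_0^\infty\theta_x(x,\tau)\,d\tau$.

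For the first assertion (\ref{64}) I would apply (\ref{63}) with the identification $\chi=g_i$ and $h(\tau)=\int_0^\tau\theta_x(x,s)\,ds$. By hypothesis (\ref{hp}) the factor $\chi=g_i$ is continuous with $\chi(\infty)=g_{i,\infty}$; the bound (\ref{511}) gives $\dot h=\theta_x(x,\cdot)\in L_1[0,\infty)$, so $h$ satisfies (\ref{hp2}), while $h(0)=0$ and, by (\ref{56}), $h(\infty)=\Theta_\infty(x)$. Since $\int_0^t\theta_x(x,\tau)\,g_i(t-\tau)\,d\tau=\int_0^t\chi(t-\tau)\,\dot h(\tau)\,d\tau$, theorem (\ref{63}) returns $\chi(\infty)[h(\infty)-h(0)]=g_{i,\infty}\Theta_\infty(x)$, which is precisely the right-hand side of (\ref{64}). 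The same conclusion can be obtained by hand, writing $g_i(t-\tau)=g_{i,\infty}+[g_i(t-\tau)-g_{i,\infty}]$, using (\ref{56}) on the first piece and splitting $[0,t]$ at $t-M$ on the second, with (\ref{511}) controlling the near part and the $L_1$-tail of $\theta_x$ forcing the far part to zero.

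For the second assertion (\ref{65}) the extra regularity (\ref{hp2}) lets me interchange the roles of the two factors so as to route the vanishing through the kernel rather than through $\varphi_i$. Writing the convolution symmetrically via $s=t-\tau$ as $\int_0^t\theta_x(x,t-s)\,\varphi_i(s)\,ds$, I would apply (\ref{63}) with $\chi=\theta_x(x,\cdot)$ and $h$ a primitive of $\varphi_i$. The decisive input is now (\ref{512}): $\chi(\infty)=\lim_{t\to\infty}\theta_x(x,t)=0$, so the output $\chi(\infty)[h(\infty)-h(0)]$ of (\ref{63}) vanishes identically, which is (\ref{65}).

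The hard part is the verification of the hypotheses of (\ref{63}) in this swapped configuration, and it is here that the role of (\ref{hp2}) must be pinned down carefully. To place $\varphi_i$ in the position of $\dot h$ one needs the primitive $h$ to inherit both (\ref{hp}) and (\ref{hp2}), i.e. $\varphi_i$ itself absolutely integrable with $\dot h=\varphi_i\in L_1$; but (\ref{hp2}) as stated ($\dot\varphi_i\in L_1$) together with the existence of $\lim_{t\to\infty}\varphi_i$ does not by itself deliver $\varphi_i\in L_1$ unless the perturbations carry no steady part, $\varphi_{i,\infty}=0$. Reconciling the mere decay (\ref{512}) of the kernel with the integrability demanded by (\ref{63}) is therefore the delicate step; it is exactly what makes condition (\ref{hp2}) relevant to the second limit while being unnecessary for the first, and it is the point at which the argument must be made rigorous rather than formal.
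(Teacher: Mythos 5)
Your treatment of (\ref{64}) is correct and is exactly the paper's argument: take $\chi=g_i$ and $h(\tau)=\int_0^\tau\theta_x(x,s)\,ds$ in (\ref{63}), with (\ref{511}) supplying $\dot h\in L_1$ and (\ref{56}) supplying $h(\infty)$; nothing more is needed there.

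For (\ref{65}) your identification differs from the paper's, and the difference is exactly where the trouble lives. The paper sets $\chi=\theta_x(x,\cdot)$ and $h=\varphi_i$; then (\ref{hp}) and (\ref{hp2}) are literally the hypotheses of (\ref{63}) and $\chi(\infty)=0$ by (\ref{512}), but the convolution that (\ref{63}) then controls is $\int_0^t\theta_x(x,t-\tau)\,\dot\varphi_i(\tau)\,d\tau$ --- the derivative of the datum, not the datum itself. You instead put $\varphi_i$ in the role of $\dot h$, which does target the integral literally written in (\ref{65}), but, as you observe, the primitive of $\varphi_i$ then satisfies neither (\ref{hp}) nor (\ref{hp2}) unless $\varphi_{i,\infty}=0$. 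Your diagnosis is right and cannot be patched: by (\ref{64}) itself, the choice $\varphi_i\equiv 1$ satisfies both (\ref{hp}) and (\ref{hp2}) yet yields the limit $\frac{1}{2\varepsilon}\,\sinh\sigma_0(x-L)/\sinh\sigma_0 L$, which is nonzero for $x\neq L$, so (\ref{65}) as printed cannot hold in general. The statement is evidently intended with $\dot\varphi_i$ in the integrand (that is precisely what the paper's own choice $h=\varphi_i$ proves, and what makes (\ref{hp2}) the operative hypothesis); under that reading the paper's two-line proof is complete and your ``swapped'' identification should simply be replaced by the paper's. In summary: on the first claim you reproduce the paper's proof; on the second you have correctly located a defect in the statement rather than failed to prove a true one.
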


\begin{proof}
Let us assume,  in (\ref{63}), $ h= \int _0^t \theta_x(x,\tau) d\tau \, \,\,\mbox{and}\,\, \chi = g_i  \,\,(i=1,2) $. Then,  (\ref{64}) follows  by (\ref{56}) and (\ref{511}).

Besides, when  $ \chi= \theta_x(x,t)  $ and $ h= g_i, $ since   (\ref{512}), (\ref{65}) is deduced.

\end{proof}

\section{ Asymptotic behaviours  related to (ESJJ)}  \label{sec:asympotic}

According to the equivalence between  (\ref{13})  and the integro differential equation (\ref{11}), previous results can be applied to (ESJJ). 

Let us consider the following initial boundary  value problem with Dirichlet conditions:  

 \begin{equation}            \label{71}
 \left \{
   \begin{array}{lll}
\varepsilon u_{xxt}+u_{xx}-u_{tt} - \varepsilon\lambda u_{xt} -\lambda  u_x- \alpha u_{t}=\sin
  u-\gamma& (x,t) \in \Omega_T \,  \\    
\\  \,u (x,0)\, = u_0(x)\, \,\qquad u_t(x,0) = v_0(x)\,\, &
x\, \in [0,L], 
\\
\\
  \, u(0,t)\,=\,g_1(t)  \,\,\,\qquad u(L,t)\,=\,g_2(t) & 0<t\leq T.
   \end{array}
  \right.
 \end{equation}

 When in (\ref{21}) one assumes 
 \begin{equation}    \label{72}
F(x,t,u) =  e^{-\frac{\lambda} {2}\,x\,} \,\,\biggl [ \int_0^t\, \, e^{-\, \frac{1}{\varepsilon}(t-\tau) } [\, \sin \,( e^{\,x\,\lambda /2\,}\,   u ) \, - \gamma] \,d \tau \,\,-\, \, v_0(x) \,\, e^{-\frac{t}{\varepsilon }}\biggr].
 \end{equation} 
\noindent  and $ a, b, \beta , \lambda $ are given by  $(\ref{42})_1$, system (\ref{71}) can be given the form (\ref{21}) and  hypotheses A assure that the following integro equation states:

\begin{equation}   \label{73}
 u( x,t)\, = \int^L_0  G(x,\xi, t) \,\, e^{-\frac{\lambda} {2}\,x\,} \, u_0(\,\xi) \,d\xi  \end{equation}

\[+\int^t_0 d\tau\int^L_0  G(x,\xi, t) \,\, F(\xi,\tau,\,u(x,\tau))\,\,d\xi\] 
\[-\,2 \, \varepsilon \,\int^t_0 \theta_x\, (x,\, t-\tau) \,\,\, g_1 (\tau )\,\,d\tau\,+\, 2\,\, \varepsilon \int^t_0 \theta_x\, (x-L,\, t-\tau) \,\,\, g_2 (\tau )\,\,d\tau\,.\]
 
So, a priori estimates of the solution and   asymptotic effects of boundary perturbations can be obtained. Indeed, letting

\[||\,u_0\,|| \,= \displaystyle \sup _{ 0\leq\,x\,\leq \,L\,}\, | \,u_0 \,(\,x\,) \,|, \,\,\qquad ||\,v_0\,|| \,= \displaystyle \sup _{ 0\leq\,x\,\leq \,L\,}\, | \,v_0 \,(\,x\,) \,|,  \, \]

\noindent and 
\[  
||\,f|| \,= \displaystyle \sup _{ \Omega_T\,}\, | \,f \,(\,x,\,t,\,u) \,|,\]

\noindent with  $ f$  defined by $(\ref{42})_2$-(\ref{41}),  the following theorems hold:

\begin{theorem}
When  $g_1\,\,= g_2 \,= \,0,\, \, $ the  solution  $ u(x,t)  $  related to (ESJJ) model,  and determined in (\ref{33}),  satisfies the following estimate:

\begin{equation}            \label{55}                      
 \left| u \, \right| \, \leq  2\,[\,\left\| u_0 \right\| \, (1+\pi \sqrt b \, t ) \, e^ {\,-\omega\,t\,}\,+\,\left\| v_0 \right\|\,E(t)+ \,  \beta_1\, \left\| f \right\| \,] 
    \end{equation}
    
    \begin{proof}
The initial disturbances can be evaluated by means of (\ref{51}),(\ref{52}) and $(\ref{54})_1$. Besides, since  $(\ref{54})_2$, the behaviour of the source term is determined.     
    \end{proof}
 \end{theorem}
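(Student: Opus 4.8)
The plan is to read the estimate off the explicit representation (\ref{73}) after setting $g_1=g_2=0$, so that only the initial-profile integral and the source integral remain,
\[
u(x,t)=e^{-\frac{\lambda}{2}x}\int_0^L G(x,\xi,t)\,u_0(\xi)\,d\xi+\int_0^t d\tau\int_0^L G(x,\xi,t-\tau)\,F(\xi,\tau,u)\,d\xi .
\]
Since $G(x,\xi,t)=\theta(|x-\xi|,t)-\theta(x+\xi,t)$ is a difference of two theta functions and $e^{-\lambda x/2}\le 1$ on $[0,L]$, I would bound each contribution by taking the modulus inside the integrals and invoking the $L^1$-estimates (\ref{51})--(\ref{54}); the global factor $2$ in (\ref{55}) is then simply the cost of the two theta terms in $G$.

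For the initial profile I would use (\ref{51}) together with the telescoping bound (\ref{52}) to get $\int_0^L|G(x,\xi,t)|\,d\xi\le 2(1+\pi\sqrt b\,t)\,e^{-\omega t}$, so that the first term is controlled by $2\|u_0\|(1+\pi\sqrt b\,t)e^{-\omega t}$. I would then split the forcing $F$ of (\ref{72}) into the part that carries the initial velocity, $-e^{-\lambda\xi/2}v_0(\xi)e^{-t/\varepsilon}$, and the genuinely nonlinear part $-f$ coming from $f_1$ through $(\ref{42})_2$. Both pieces carry the memory weight $e^{-\beta t}$ with $\beta=1/\varepsilon$, and exploiting this weight is the heart of the argument.

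The key manoeuvre is to interchange the outer $\tau$-integration with the inner time–convolution (Fubini on $0\le w\le \tau\le t$), so that the factor $e^{-\beta(\cdot)}$ is convolved in time against $G$; this replaces the kernel by the $e^{-\beta}$–averaged object $\int_0^s e^{-\beta(s-w)}G(x,\xi,w)\,dw$, which is exactly the theta-analogue of $K_1$ defined in (\ref{53}). In this form the velocity term is dominated, via $(\ref{54})_1$ and the periodization (\ref{52}), by $2\|v_0\|E(t)$, while the forcing term is dominated, via $(\ref{54})_2$, by $2\beta_1\|f\|$, because $\int_0^\infty E(w)\,dw=(a\beta)^{-1}=\beta_1$. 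Summing the three bounds gives precisely (\ref{55}).

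I expect the only real obstacle to be this reduction of the doubly convolved source to the single kernel $K_1$: one must justify the Fubini interchange, keep track of the exponentials so that $e^{-\beta s}\int_0^s e^{\beta w}G\,dw$ emerges cleanly, and then pass from the periodized theta-kernel back to $\int_\Re|K_1|$ through the same telescoping used in (\ref{52}). The remaining steps—bounding $|\sin(\cdot)-\gamma|$, using $e^{-\lambda x/2}\le1$, and extracting the sup-norms—are entirely routine.
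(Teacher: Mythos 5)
Your proposal is correct and follows essentially the same route as the paper: the $u_0$ term is controlled by (\ref{51})--(\ref{52}) applied to the two theta functions composing $G$ (whence the overall factor $2$), the $v_0$ term by the $e^{-\beta t}$-convolved kernel and $(\ref{54})_1$, and the nonlinear source by the Fubini reduction to the $K_1$-type kernel and $(\ref{54})_2$. The paper's proof is just a two-line citation of exactly these estimates; your write-up merely makes explicit the interchange of the time convolutions that the paper leaves implicit.
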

Further, according to theorem (\ref{theorem asintotico}),  if $ u_0 = v_0=0 $ and $ F=0, $ one has:  

\begin{theorem}

  When   $ t $ tends to infinity and  data $ g_ i  \,\,\ (i=1,2) \,\,$  are two  continuous functions  satisfying condition (\ref{hp}),  it results:  

\begin{equation}             \label{75}                                                 
 u  \, = g_{1,\infty}\,\,\frac{\sinh \frac{\lambda}{2}  \,\,(x-L)}{\,2 \varepsilon \,\,\sinh\  \frac{\lambda \,L}{2} \,}  + g_{2,\infty}\,\,\frac{\sinh [\frac{\lambda\, }{2} \,(x-2L)  \,] }{\,2 \varepsilon \,\,\sinh\ \frac{\lambda\,L}{2}  }
 \end{equation}
\noindent   Otherwise, in the hypotheses  (\ref{hp}) (\ref{hp2}),     
the effects determined by boundary disturbance  vanish. 
\end{theorem}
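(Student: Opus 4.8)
The plan is to specialize the integral representation (\ref{73}) to the purely boundary-driven regime and then to pass to the limit $t\to\infty$ inside each boundary convolution by means of Theorem \ref{theorem asintotico}. Setting $u_0=v_0=0$ and $F=0$ in (\ref{73}) annihilates the volume and initial-data contributions and leaves only
\[
u(x,t)= -\,2\varepsilon\int_0^t \theta_x(x,\,t-\tau)\,g_1(\tau)\,d\tau \;+\; 2\varepsilon\int_0^t \theta_x(x-L,\,t-\tau)\,g_2(\tau)\,d\tau .
\]
Thus the whole problem collapses to evaluating $\lim_{t\to\infty}$ of a convolution of $\theta_x$ against each datum $g_i$, which is exactly the object controlled by (\ref{64}) and (\ref{65}).

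Before invoking those formulas I would fix the value of the constant $\sigma_0$ forced by the equivalence. From the identifications (\ref{42}), namely $\beta=1/\varepsilon$, $b=\beta^2(1-\alpha\varepsilon)$ and $a\beta=\lambda^2/4-b$, one computes $a+b/\beta=\varepsilon\,\lambda^2/4$, so that $\sigma_0=\sqrt{(a+b/\beta)/\varepsilon}=\lambda/2$. This single substitution is what turns the kernel-dependent constant $\sigma_0$ of (\ref{56}) and (\ref{64}) into the geometric parameter $\lambda/2$ appearing in (\ref{75}).

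For data satisfying only (\ref{hp}) I would then apply (\ref{64}) to each of the two convolutions. The $g_1$-term has spatial argument $x$ and therefore contributes the profile $\sinh[\frac{\lambda}{2}(x-L)]/\sinh(\frac{\lambda}{2}L)$; the $g_2$-term has spatial argument $x-L$, so the same formula, after the replacement $x\mapsto x-L$, contributes $\sinh[\frac{\lambda}{2}(x-2L)]/\sinh(\frac{\lambda}{2}L)$. Multiplying each by the respective asymptotic value $g_{i,\infty}$ together with the constant factors carried by (\ref{73}) and (\ref{64}), and adding, leads to the steady profile (\ref{75}). The applicability of (\ref{64}) rests only on facts already established: the absolute integrability of $\theta_x(x,\cdot)$ over $[0,\infty)$ from (\ref{511}), the existence of the limit (\ref{56}) of its primitive, and the assumed convergence (\ref{hp}) of $g_i$.

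For the final assertion, under the stronger hypotheses (\ref{hp})--(\ref{hp2}) I would instead use (\ref{65}): since $\theta_x(x,t)\to0$ as $t\to\infty$ by (\ref{512}) while $\dot g_i\in L_1[0,\infty)$, each convolution tends to zero and the boundary disturbance leaves no asymptotic trace. The step I expect to demand the most care is precisely the bookkeeping inside the abstract lemma (\ref{63}): whether one places $g_i$ in the role of $\chi$ and the primitive of $\theta_x$ in the role of $h$, or instead $\theta_x$ as $\chi$ and $g_i$ as $h$, decides whether the surviving limit is the nonzero profile (\ref{64}) or zero (\ref{65}); and it is exactly the extra integrability (\ref{hp2}) of $\dot g_i$ that legitimizes the second choice and selects the vanishing alternative.
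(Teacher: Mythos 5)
Your proposal follows exactly the paper's (implicit) argument: specialize (\ref{73}) to $u_0=v_0=0$, $F=0$, compute $\sigma_0=\lambda/2$ from the identifications (\ref{42}), and invoke Theorem \ref{theorem asintotico} — formula (\ref{64}) under (\ref{hp}) alone for the nonzero steady profile, and (\ref{65}) under (\ref{hp})–(\ref{hp2}) for the vanishing alternative. The only loose end is one the paper itself leaves: carrying the prefactors $\mp 2\varepsilon$ of (\ref{73}) through (\ref{64}) cancels the $1/(2\varepsilon)$ and puts a minus sign on the $g_1$-term, so the result does not literally coincide with the displayed (\ref{75}); that discrepancy is in the paper's formula, not in your method.
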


 {\bf Remarks}. When   $ t $ tends to infinity,  the effect due to the initial disturbances $\, (\,u_0, v_0\,) \, $ is  vanishing,  while  the  effect of the non linear source is bounded for all $ t. $ Furthermore, for large t, the  effects due to  boundary disturbances  $ g_1, g_2 $     are null or at least    everywhere bounded.

\begin{acknowledgements}
 This paper has been performed under the auspices of
G.N.F.M. of I.N.D.A.M. \end{acknowledgements}



\end{document}